\numberwithin{equation}{section}
\newtheorem{lemm}{Lemma}[section]
\newtheorem{prop}[lemm]{Proposition}
\newcommand{\R}{\mathbb{R}}                  % the reals
\newcommand{\scpr}[2]{\langle#1\, \vert \, #2 \rangle}
\newcommand{\betr}[1]{\left\lvert #1 \right\rvert}
\newcommand{\mal}{\mu_{\text{AL}}}
\newcommand{\wt}[1]{\widetilde{#1}}
\newcommand{\hilb}{\mathcal{H}}
\newcommand{\bg}[1]{#1^{(0)}}
\newcommand{\pin}{\pi'}
\newcommand{\sscpr}[3]{\langle#1\, \vert \, #2 \, \vert \, #3\rangle} 
\newcommand{\ket}[1]{\lvert\, #1\rangle}
\newcommand{\bra}[1]{\langle\, #1\rvert}
\newcommand{\vac}{\text{vac}}
\DeclareMathOperator{\abar}{\overline{\mathcal{A}}}
\DeclareMathOperator{\tr}{tr}
\DeclareMathOperator{\one}{\mathbb{I}}
\DeclareMathOperator{\id}{id}
\DeclareMathOperator{\tdiff}{TDiff}
\DeclareMathOperator{\diff}{Diff}
\DeclareMathOperator{\symm}{Symm}
\DeclareMathOperator{\tgiff}{TGiff}
\DeclareMathOperator{\giff}{Giff}
\DeclareMathOperator{\gymm}{Gymm}
\DeclareMathOperator{\gauge}{\mathcal{G}}
\DeclareMathOperator{\gont}{\triangleright}
\DeclareMathOperator{\gone}{\triangleright}
\DeclareMathOperator{\pone}{\triangleright}
\DeclareMathOperator{\pont}{\triangleright}
\DeclareMathOperator{\ad}{ad}
\title{On loop quantum gravity kinematics with non-degenerate spatial
background}
\author{Hanno Sahlmann\\{\small Institute for Theoretical Physics, Karlsruhe
Institute of Technology, Karlsruhe (Germany)}}
\date{{\small Preprint KA-TP-14-2010}}
\begin{document}
\maketitle
\begin{abstract}
In a remarkable paper, T.\ Koslowski introduced kinematical representations for
loop quantum gravity in which there is a non-degenerate spatial background
metric present. He also considered their properties, and showed that Gau{\ss}
and  diffeomorphism constraints can be implemented. With the present article, we
streamline and extend his treatment.
In particular, we show that the standard regularization of the geometric
operators leads to well defined operators in the new representations, and we
work out their properties fully. We also give details on the implementation of
the constraints. All of this is done in such a way as to show that the standard
representation is a particular (and in some ways exceptional) case of the more
general constructions. This does not mean that these new representations are 
as fundamental as the standard one. Rather, we believe they might be useful to 
find some form of effective theory of loop quantum gravity on large scales.
\end{abstract}
%--------------------------------------------------------------
\section{Introduction}
%--------------------------------------------------------------
In the paper \cite{Koslowski:2007kh}, T.\ Koslowski introduced
kinematical representations for loop quantum gravity in which there is a
non-degenerate spatial background
metric present. In these representations, the flux-operators of loop quantum
gravity (encoding quantized spatial geometry) acquire a c-number term in
addition to the standard one:
\begin{equation}
\label{eq_nrep}
\pi(E_{S,f})={X}_{S,f} +  \bg{E}_{S,f}\id, \quad \text{ with }  
\bg{E}_{S,f}=\int_S *\bg{E}{}^I f_I.
\end{equation}
The quantity $\bg{E}_{S,f}$ is exactly the classical value of the flux in a 
background geometry given by a densitized triad $\bg{E}$. The possibility of
such representations was known to some experts before
\cite{Koslowski:2007kh} (they were mentioned for example in passing in 
\cite{Sahlmann:2002xv}) but they were not considered interesting, because
they did not alleviate the asymmetry in fluctuations between the canonical
variables, and because gauge transformations and diffeomorphisms could not be
implemented unitarily. The remarkable discovery of \cite{Koslowski:2007kh} is
that the transformations \emph{can} be implemented if one is willing to go to a
large direct sum of such representations, and that, even better, the
corresponding constraints can then be implemented.\footnote{It may be
interesting to compare this to the situation in \cite{Varadarajan:2007dk},
where another new and interesting representation for loop quantum gravity is
constructed, which is likewise highly reducible.} 

The goal of the present article is to round off the picture that is emerging in 
\cite{Koslowski:2007kh}, by showing that geometric operators can be defined in
the new representations \emph{exactly} as in the standard representation, and
that they have a very simple structure. For example we find for the volume
operator of a region $R$ that 
\begin{equation}
 V_R= V_R^\vac+\bg{V}_R\id 
\end{equation}
where $\bg{V}_R$ is the classical volume of $R$ in the spatial background
geometry. Additionally we discuss the implementation of Gau{\ss} and
diffeomorphism constraint, using exactly the same techniques as in the standard
representation. This adds some detail and clarifications to the treatment in
\cite{Koslowski:2007kh}. Finally \cite{Koslowski:2007kh} was using an algebra
for loop quantum gravity that many researchers may not be very familiar with.
Here we will work with the standard formalism.

As for the significance of these representations, we believe that they are not
as fundamental as the standard representation. The latter shows how geometry
can emerge from literally nothing, and it beautifully explains the black hole
entropy-area relation, and more. The new representations, by contrast, are
based on a classical geometry as input. Still they may be useful in some form of
perturbative calculation, in which a huge number of elementary geometric
excitations are subsumed into the classical background. 
 
We should point out that the new representations are treating the two
canonically conjugate quantities $A$ and $E$ very differently, in much the same
way as the standard representation does. The states in the new representations,
when viewed as functions of $E$, are sharply peaked around a classical spatial
geometry, whereas they are almost constant when viewed as functions of the
conjugate variable $A$. In this sense they fail to approximately describe a
classical \emph{space-time}. They only approximately describe a classical
\emph{spatial} geometry. While it would be very interesting to distribute the
quantum mechanical uncertainty more evenly, and thus introduce also a classical
background connection, the states considered here do not do that, and it is also
not straightforward to generalize them in such a way that they do. It is however
conceivable be that they can be taken as the ingredients of some form of
limiting procedure, to obtain states that encode a full space-time geometry.   

We should also remark that the new representations, as well as those obtained
from them for the purpose of unitarily implementing the spatial diffeomorphisms 
are \emph{not} counterexamples to the uniqueness theorems
\cite{Fleischhack:2004jc,Lewandowski:2005jk} on representations of the
kinematical algebra. In both cases, crucial hypotheses are violated: As we will
see in the next section, there is no unitary implementation of the
the action of spatial diffeomorphisms on the basic variables in the new
representations. And in the larger spaces that do admit such a unitray
implementation, the representation is highly reducible. 

In the next section, we recall the definition of the new representations from
\cite{Koslowski:2007kh} and state some elementary properties. Then we
investigate the geometric operators in section \ref{se_geo}. In sections
\ref{se_uni} and \ref{se_imp} we discuss implementations of
gauge transformations and diffeomorphisms, and of the constraints. We finish
with some discussion and outlook.

%--------------------------------------------------------------
\section{The new kinematical representations}
%--------------------------------------------------------------
To fix notation, we will start by reviewing the standard LQG quantum kinematics.
The canonical pair to be quantized consists of an SU(2) connection $A$ and a
densitized triad field $E$. These fields transform under the automorphisms of
the SU(2) bundle. Upon choosing (local) trivializations, these bundle
automorphisms split into gauge transformations and diffeomorphisms. The
automorphism group is then the semi-direct product of those subgroups, 
\begin{equation}
\label{eq_dirpr}
(g,\phi) \cdot (g',\phi') =(g a_\phi(g'),\phi\circ\phi'),
\end{equation}
with $g,g'$ gauge transformations, $\phi,\phi'$ diffeomorphisms, and the
automorphism $a_\phi(g)= \phi_* g$ on the group of gauge transformations given
by the push-forward under diffeomorphisms. 
The fields then transform under gauge transformations
$g$ and diffeomorphisms $\phi$ in the standard way:
\begin{gather}
 A \overset{g}{\longmapsto} \ad_g(A)+g^{-1}dg, \qquad A
\overset{\phi}{\longmapsto} \phi_{*}A\\
 E \overset{g}{\longmapsto} ad_g(E) =:g \gone E, \qquad E
\overset{\phi}{\longmapsto} \phi_{*}E=:\phi \pone E
\end{gather}
where $\ad$ is the adjoint action of SU(2) on su(2), and the star signifies
push-forward. The basic variables used for quantization are chosen in such a way
as to
make their transformation behavior under SU(2) and spatial diffeomorphisms as
simple and transparent as possible. For the connection, one considers 
holonomies
\begin{equation}
h_{\alpha}[A]=\mathcal{P} \exp \int_{\alpha} A, 
\end{equation}
or more generally, functions of such holonomies, 
\begin{equation}
\label{eq_cyl}
T[A]\equiv t(h_{\alpha_1}[A],h_{\alpha_2}[A],\ldots,h_{\alpha_n}[A])
\end{equation}
for a finite number of paths $\alpha_1,\ldots, \alpha_n$ forming a graph, and a
function $t$ on coupys of SU(2). Such
functionals are also called \emph{cylindrical functions}. 

For the field $E$ a natural functional is its flux through surfaces $S$:
\begin{equation}
\label{eq_flux}
E_{S,f}[E]=\int_S *E_I f^I
\end{equation}
where $f$ is a function taking values in su(2)$^*$ and $*E$ is the two-form 
$E^a \epsilon_{abc}\text{d}x^b \wedge \text{d}x^c$. 

The action of gauge transformations and diffeomorphisms on $(A,E)$ induces a
representation on cylindrical functions and fluxes. It is given by
\begin{align}
\label{eq_act1}
&(g\gont T)[A]=
t(g(s1)^{-1}h_{\alpha_1}g(e1), g(s2)^{-1}h_{\alpha_2}g(e2),\ldots)\\
&(\phi\pont T)[A]=
t(h_{\phi(\alpha_1)}[A],h_{\phi(\alpha_2)}[A],\ldots)\\
&(g\gone E_{S,f})=E_{S,\ad_g(f)}\\
&(\phi\pone E_{S,f})=E_{\phi(S),\phi_*f} 
\label{eq_act4}
\end{align}
with $t$ the function on SU(2)$^n$ characterizing the cylindrical functional
$T$ according to \eqref{eq_cyl}. 
To quantize cylindrical functions and fluxes, one is seeking a representation of
the following algebraic relations on a Hilbert space: 
\begin{equation}
\label{eq_qalg}
\begin{split}
 T_1\cdot T_2[A]&=T_1[A]T_2[A]\\
  [T,E_{S,f}]&= 8\pi \iota l^2_P X_{S,f}[T]\\
  [T,[E_{S_1,f_1}, E_{S_2,f_2}]]&=(8\pi \iota l^2_P)^2[X_{S_1,f_1},
X_{S_2,f_2}][T]  \\
  \ldots &\\
  (E_{S,f})^*=E_{S,\overline{f}},\quad&\quad
  (T[A])^*=\overline{T}[A]
  \end{split}
\end{equation}
Here, $X$ is a certain derivation on the space of cylindrical functions. These
relations define an algebra. Diffeomorphisms and gauge transformations act as
automorphisms. 

The kinematical Hilbert space
\begin{equation}
 \hilb_\text{kin}=L^2(\abar, \mal). 
\end{equation}
is a space of functionals over the space of generalized connections $\abar$. The
fluxes are represented by derivatives $X_{S,f}$, 
\begin{equation}
 \pi(E_{s,f})=X_{S,f}. 
\end{equation}
Cylindrical functions act by multiplication. Gauge transformations and 
diffeomorphisms (and their semi-direct product) are unitarily implemented  
by the operators 
\begin{equation}
U_\phi \ket{T}=\ket{\phi \pont T}, \qquad U_g \ket{T}= \ket{g\gont T}.
\end{equation}
This representation is called \emph{Ashtekar-Lewandowski representation}. In
the present article, we will also use the term \emph{vacuum representation}, to
contrast it with the new representations, which can be thought of as
representations containing a ``geometry condensate'' \cite{Koslowski:2007kh}.

Now we turn to the new representations \cite{Koslowski:2007kh}. Let $\bg{E}$ be
a classical triad field. Then, following Koslowski we define the new
representation on \emph{the same} Hilbert space as
before, by changing the action of the fluxes:
\begin{equation}
 \pin(E_{S,f})={X}_{S,f} +  \bg{E}_{S,f}\id, \quad \text{ with }  
\bg{E}_{S,f}=\int_S *\bg{E}{}^I f_I.  
\end{equation}
It is easily checked that this gives another representation of the algebra. We
list some elementary properties. 
\begin{enumerate}
 \item The spectra of the fluxes have changed. If $\lambda$ is an eigenvalue
of $X_{S,f}$ then $\lambda+ \bg{E}_{S,f}$ is an eigenvalue of $\pin(E_{s,f})$. 
In particular, a constant
cylindrical function is now an eigenfunction of $\pin(E_{s,f})$ with in general
non-vanishing eigenvalue $\bg{E}_{S,f}$.
\item The new representation is still cyclic, with the empty spin net as cyclic
vector.
\item The new representation is unitarily inequivalent to the standard one.  
\item The operators implementing diffeomorphisms and gauge transformations in
the standard representation are still well defined and unitary in the new
representation, but it can be easily checked that they do not implement the
algebra-automorphisms anymore. For example
\begin{equation}
 U_\phi \pin(E_{S,f}) U_{\phi^{-1}}\neq \pin(\phi \pone E_{S,f})
\end{equation}
in general. The reason is that the $U_\phi$ do not change the background
geometry.\footnote{Indeed, one finds 
\begin{equation*}
 U_\phi \pin(E_{S,f}) U_{\phi^{-1}}=
U_\phi \left(X_{S,f}+ \bg{E}_{S,f}\one \right) U_{\phi^{-1}}
= X_{\phi(S),\phi_*f} + \bg{E}_{S,f}\one
\end{equation*}
by virtue of the fact that the $U_\phi$ act on the $X_{S,f}$ in the standard
way, but commute with the new c-number term. 
}  

\item The standard kinematical representation can be viewed as a special case
of the new one, for $\bg{E}=0$. 
\end{enumerate}
We note that point 5 is an empediment to the implementation of the
gauge and diffeomorphism constraints. It is also the reason that the new
representations are not counterexamples to the uniqueness
theorems \cite{Fleischhack:2004jc,Lewandowski:2005jk}, since the latter require
the implementation of the action \eqref{eq_act1} -- \eqref{eq_act4} on the
basic operators. 

We note again that in the new representations, the operators representing the
connection remain untouched. One might wonder whether it is possible to obtain
representations in which also a background connection is used. A straightforward
way to proceed would be as follows:\footnote{The following argument is due to J Lewandowski (private communication).} The new representations from above can be
obtained by applying a translation 
\begin{equation}
E\mapsto E+\bg{E} 
\end{equation}
to the functionals $E_{S,f}$ of \eqref{eq_flux}.
A similar translation, 
\begin{equation}
A\mapsto A+\bg{K}, 
\end{equation}
is possible on the connection, where $\bg{K}$ is a suitable fixed tensor field.
The difference is that the translation on the triad field, when applied to the
functional $E_{S,f}$, leads to a functional of the same type (flux plus constant
term), whereas the translation on the connections turns a holonomy into a very
complicated functional of the connection. Thus this translation leads out of the
algebra of functionals that is the basis for loop quantum gravity and can thus
not be used to define new states.   

In the following section we will consider the existence and properties of the
geometric operators for area and volume in the new representation. We then turn
to the implementation of diffeomorphisms and gauge transformations, and to the
implementation of the corresponding constraints.
%--------------------------------------------------------------
\section{Geometric operators}
\label{se_geo}
%--------------------------------------------------------------
In this section, we will show that area and volume operators can be defined in
the new representations with exactly the same regularization and quantization
strategy (\cite{Rovelli:1994ge,Ashtekar:1996eg,Ashtekar:1997fb}) as in the
vacuum representation. This simplifies the treatment in \cite{Koslowski:2007kh},
in which a new volume operator was used. Also, we extend the results about
expectation values of the area operator obtained there. In fact, we make the
structure of area and volume operator completely explicit. It is very simple.
The action of the geometric operators in the new representations is given as a
sum
of two terms. One term is precisely the action in the vacuum representation, the
other one is given as the value of the geometric quantity in the background
geometry $\bg{E}$ times the identity operator.
\begin{prop}
The standard regularization and quantization procedures for area and 
volume operators can be applied to the new representations and lead to well
defined operators. We find  
\begin{equation}
 V_{R}=V_R^{\vac}+ \bg{V}_R\id, \qquad A_S=A^{\vac}_S+\bg{A}_S\id
\end{equation}
with $V^{\vac}, A^{\vac}$ the geometric operators in the vacuum representation,
and 
$\bg{V}_R,\bg{A}_S$ the classical values in the background geometry. 
\end{prop}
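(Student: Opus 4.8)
The plan is to run the standard regularizations of the area and volume operators verbatim, exploiting the single structural fact that distinguishes $\pin$ from $\pi$: the new flux differs from the old only by the central term $\bg{E}_{S,f}\id$, which commutes with everything and, for a flux through a cell of coordinate size $\epsilon$, scales with the coordinate area of that cell. Recall that $A_S$ and $V_R$ are each defined as a limit over a family of partitions of $S$ (resp.\ $R$) into cells $S_n$ (resp.\ $R_n$) of a regularized expression that is an algebraic function of the fluxes through the cells: schematically $A_S=\lim_\epsilon\sum_n\sqrt{\sum_i \pin(E_{S_n}^i)^2}$, with $E_{S_n}^i$ the $i$-th internal component of the flux, and a sum of square roots of internal-index contractions of triple products of fluxes for the volume. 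I would simply substitute $\pin(E_{S_n}^i)=X_{S_n}^i+\bg{E}{}^i_{S_n}\id$ into these expressions and analyze the two sources of contributions separately.

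The key step is the cell-by-cell splitting. Fix a cylindrical state on a graph $\gamma$. For a partition fine enough, every cell meets $\gamma$ in at most one point, so the cells fall into two classes. On the finitely many \emph{punctured} cells the quantum flux $X_{S_n}^i$ acts as a multiple of an angular-momentum operator on the corresponding edge and has $O(1)$ eigenvalues, while $\bg{E}{}^i_{S_n}=O(\epsilon^2)$; since the background term is central, $\sum_i(X_{S_n}^i+\bg{E}{}^i_{S_n})^2=\sum_i(X_{S_n}^i)^2+O(\epsilon^2)$ on the finite-dimensional edge representation space, and the square root converges to $\sqrt{\sum_i(X_{S_n}^i)^2}$. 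Summing over the punctured cells reproduces $A^{\vac}_S$ exactly; the $O(\epsilon^2)$ corrections, being finite in number, drop out. On every remaining cell $X_{S_n}^i$ annihilates the state, so only the central term survives and $\sqrt{\sum_i(X_{S_n}^i+\bg{E}{}^i_{S_n})^2}=\sqrt{\sum_i(\bg{E}{}^i_{S_n})^2}\,\id$. Since $\bg{E}{}^i_{S_n}\approx \bg{E}{}^{ia}n_a\,\epsilon^2$, this is precisely the Riemann sum for $\int_S\sqrt{\sum_i(\bg{E}{}^{ia}n_a)^2}\,d^2\sigma=\bg{A}_S$, the classical area of $S$ in the background. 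This yields $A_S=A^{\vac}_S+\bg{A}_S\id$.

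For the volume the argument is identical in structure, with the non-degenerate vertices of $\gamma$ playing the role of the punctures. Inserting $\pin(E)=X+\bg{E}\id$ into the triple product $\epsilon^{ijk}\pin(E^i)\pin(E^j)\pin(E^k)$ and expanding, the purely quantum term $\epsilon^{ijk}X^iX^jX^k$ is $O(1)$ only on cells containing such a vertex and reproduces $V^{\vac}_R$, while every mixed term carries at least one factor $\bg{E}=O(\epsilon^2)$ and is subleading there. On the cells free of vertices only $\epsilon^{ijk}\bg{E}{}^i\bg{E}{}^j\bg{E}{}^k\propto\det(\bg{E})\,\epsilon^6$ remains, and $\sum_n\sqrt{|\,\cdot\,|}$ converges to $\int_R\sqrt{|\det\bg{E}|}\,d^3x=\bg{V}_R$, giving $V_R=V^{\vac}_R+\bg{V}_R\id$.

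The main obstacle is not the algebra but the analysis of the continuum limit: one must justify interchanging $\lim_\epsilon$ with the non-linear functions (the square root for the area, the square root of the absolute value of a triple product for the volume), i.e.\ prove that the mixed quantum-background terms genuinely vanish and that the background sums converge to the stated integrals. This requires mild regularity of $\bg{E}$, so that the Riemann sums converge, together with spectral control of the non-linear functions, which is available because on each punctured cell or vertex the relevant operators act in a fixed finite-dimensional subspace where $\sqrt{\,\cdot\,}$ is continuous and the $O(\epsilon^2)$ estimates are uniform. Well-definedness of the limiting operators then follows exactly as in the vacuum case, since the quantum parts are unchanged and the added pieces are bounded multiples of the identity.
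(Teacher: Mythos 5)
Your treatment of the area operator is essentially sound (you use the cell--partition version of the regularization rather than the smearing densities $f^\epsilon_p$ employed in the paper, but the mechanism is the same: finitely many punctured cells, cross terms of order $\epsilon^2$ under the square root, Riemann sum over the unpunctured cells). The volume argument, however, has a genuine gap. You split the cells into those containing a vertex and those ``free of vertices,'' and assert that on the latter only the pure background term $\epsilon_{IJK}\bg{E}{}^I\bg{E}{}^J\bg{E}{}^K=O(\epsilon^6)$ survives. That is false for the $O(L/\epsilon)$ cubes that are traversed by an edge of $\gamma$ but contain no vertex: there one of the three internal surfaces is punctured, one $X$ factor acts nontrivially, and the mixed term $\epsilon_{IJK}X^I\bg{E}{}^J\bg{E}{}^K=O(\epsilon^4)$ not only survives but \emph{dominates} the pure background term on each such cube. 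These cubes are neither finitely many (so your ``finite number of corrections'' argument does not apply) nor do they reduce to the background Riemann sum.

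The missing step is a counting argument, and it is precisely the delicate point of the paper's proof: the number of edge-crossed, vertex-free cubes scales as $L/\epsilon$ (edges are one-dimensional), each contributes $\sqrt{|q_C|}\sim\sqrt{O(\epsilon^4)}=O(\epsilon^2)$ to the sum of square roots, so their total contribution is $O(L\epsilon)\to 0$. Without isolating this class of cells and performing this estimate, you cannot conclude that the limit exists and equals $V^{\vac}_R+\bg{V}_R\id$; a priori these cells could have produced a finite coupling between the graph and the background, or a divergence. (The paper also performs the final averaging that removes the residual regularization dependence, which you omit, but that step is indeed trivial once the cell analysis is complete.)
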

This result has a very simple form, but it is by no means trivial. A priori,
coupling terms between geometry excitations and background could have appeared,
or the regularization procedures could have broken down completely. We will have
to work somewhat to exclude both possibilities. The result is remarkable in
that it suggests a split into a background part (the c-number term) and
fluctuations (represented by the vacuum operators), and thus a physical
interpretation of the states in the new representations as excitations above a
background spatial geometry. It should be stressed, however, that this split is
a result of the application of the standard quantization procedure
\cite{Rovelli:1994ge,Ashtekar:1996eg,Ashtekar:1997fb} to the new
representations, not an assumption. 

As we have already done in
stating the result, from now on we will drop all the $\pi$'s indicating the
representation. We hope that it is clear from context. Sometimes, we may add an
index $\vac$ to denote an object in the vacuum representation, for clarity. 
\subsection{Area}
Let us first consider the area operator. The whole calculation is modeled on
\cite{Ashtekar:1996eg}, so we refer to that reference for further background. 
We pick a surface $S$ and consider a family of non-negative densities
$f^\epsilon_p(q)$ on
$S$ that has the property
\begin{equation}
\lim_{\epsilon\rightarrow 0} f^\epsilon_p(q)=\delta_p(q).
\end{equation}
$\delta_p$ is the delta-function on $S$ which is peaked at $p$. We also
introduce the shorthands 
\begin{equation}
E^I_{\epsilon}(q)=\pi(E_{S,b^If^\epsilon_p(\cdot)}), \qquad
X^I_{\epsilon}(q)=X_{S,b^If^\epsilon_p(\cdot)}
\end{equation}
and 
\begin{equation}
\bg{E}{}^{I}_{\epsilon}(p)=\int_S*\bg{E}{}^If^{\epsilon}_p.
\end{equation}
Then we can define 
\begin{equation}
\Delta_\epsilon=E_\epsilon\cdot E_\epsilon. 
\end{equation}
As in \cite{Ashtekar:1996eg}, we would like to define
\begin{equation}
\label{eq_area}
\widehat{A}_S=\lim_{\epsilon\rightarrow 0} \int_{S} \sqrt{\Delta_\epsilon} 
\end{equation}
and study its properties. It can be seen immediately that $\Delta_\epsilon$ is
essentially self adjoint on cylindrical functions, and positive semidefinite.
Therefore taking its square root is not a problem. Rather, what we have to worry
about is the limit in \eqref{eq_area}. The structure of $\Delta_\epsilon$ is 
\begin{equation}
\Delta_\epsilon=
X\cdot X+2X\cdot\bg{E}+\bg{E}\cdot\bg{E}
\end{equation}
where the first term is the standard term. To see what the additional terms
give, the first step is to go to a basis of eigenvectors for both, the
${X}\cdot{X}$ and the ${X}\cdot\bg{E}$ term. We can
consider the case of a single intersection vertex $v$ (see figure
\ref{fi_area}), as the more general case can be handled by first subdividing the
surface and considering the individual area operators separately.
\begin{figure}
\centerline{\epsfig{file=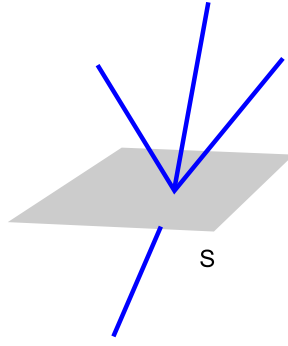, scale=0.7}}
\caption{\label{fi_area} Surface with single vertex $v$ as described in the
text.}
\end{figure}
${X}\cdot{X}$ and ${X}\cdot\bg{E}$ can be diagonalized
simultaneously. For a common eigenvector $\Psi$, one finds 
\begin{equation}
{X}_\epsilon(p)\cdot{X}_\epsilon(p)\Psi=
\alpha (f_{p}^\epsilon)^2(v)\Psi
\end{equation}
and 
\begin{equation}
{X}_\epsilon(p)\cdot \bg{E}_\epsilon(p)\Psi
=\beta |\bg{E}{}_{\epsilon}(p)|f_{p}^\epsilon(v)\Psi.
\end{equation}
Here $\cdot$ and $\betr{\ldots}$ are with respect to the metric on SU(2).   
The values $\alpha$ and $\beta$ can obtain are determined by the representation
labels on the edges intersecting at $v$. Thus we find for the regulated area
operator 
\begin{align}
\int_S\sqrt{\Delta_\epsilon}\Psi 
&=\int_S \left[\alpha (f_{p}^\epsilon)^2(v)+2\beta
|\bg{E}{}_{\epsilon}(p)|f_{p}^\epsilon(v)+|\bg{E}{}_{\epsilon}(p)|^2
\right]^{\frac{1}{2}}\, \text{d}p\,\,\Psi\\
&=\int_S \left[\left(\sqrt{\alpha}
f_{p}^\epsilon(v)+|\bg{E}{}_{\epsilon}(p)|\right)^2 
+(\beta-\sqrt{\alpha})|\bg{E}{}_{\epsilon}(p)|f_{p}^\epsilon(v)
\right]^{\frac{1}{2}}\, \text{d}p\,\,\Psi\\
&=:\int_S \left[a_\epsilon+b_\epsilon\right]^{\frac{1}{2}}\, \text{d}p\,\,\Psi.
\end{align}
To find the limit of the integral as $\epsilon \rightarrow 0$, we use the fact
that for $a,b\geq 0$
\begin{equation}
\label{eq_ineq}
\sqrt{a}\leq\sqrt{a+b}\leq\sqrt{a}+\sqrt{b}, \qquad \sqrt{a}-\sqrt{b} \leq
\sqrt{a-b}\leq\sqrt{a},
\end{equation}
where we have additionally assumed $b\leq a$ in the second pair of inequalities.
We want to apply these inequalities to the integrand. $a_\epsilon$ is manifestly
non-negative, but we do not know the sign of $b_\epsilon$. In case it is
negative, we still know that $a_\epsilon\geq |b_\epsilon|$ by virtue of the fact
that the operator $\Delta_{\epsilon}$ is non-negative. Thus the inequalities
\eqref{eq_ineq} apply, with $a=a_\epsilon$,$b=b_\epsilon$. To finish the
calculation, we note that 
\begin{equation}
\lim_{\epsilon\rightarrow 0} \sqrt{a_\epsilon} 
=\sqrt{\alpha} \lim_{\epsilon\rightarrow 0} \int_S f_{p}^\epsilon(v)\,
\text{d}p+\lim_{\epsilon\rightarrow 0} \int_S |\bg{E}{}_{\epsilon}(p)|\,
\text{d}p
= \sqrt{\alpha} +\int_S |\bg{E}(p)|\, \text{d}p.
\end{equation}
For $\sqrt{b_\epsilon}$ on the other hand, we note that the integrand is a
product of $(f^\epsilon_p)^{1/2}$ and a bounded function. One can readily show
that the integral therefore converges to zero,
\begin{equation}
\lim_{\epsilon\rightarrow 0} \sqrt{|b_\epsilon|} =0. 
\end{equation}
Using the inequalities \eqref{eq_ineq}, we thus have
\begin{equation}
\lim_{\epsilon\rightarrow 0} \int_S\sqrt{\Delta_\epsilon}\Psi
=\lim_{\epsilon\rightarrow 0} \sqrt{a_\epsilon} 
=\sqrt{\alpha} +\int_S |\bg{E}(p)|\, \text{d}p. 
\end{equation}
The $\sqrt{\alpha}$ term is the one appearing in the vacuum representation, the
second term is precisely the area of $S$ in the background. 
\subsection{Volume}
We will now turn to the volume operator. We refer to \cite{Ashtekar:1997fb} for
motivation and
details about the chosen regularization. we will now describe the aspects of it
that are relevant for our purpose. We are considering the volume of a region
$R$. The regularization procedure consists in subdividing the region $R$ into
smaller and smaller cubes. For any given cube $C$, one defines three orthogonal
surfaces $S_{i,C}$ and the phase space function 
\begin{equation}
\label{eq_cube}
q_C[E]=\frac{1}{3!}\eta_{abc}\epsilon_{IJK}E^I_{S_a}E^J_{S_b}E^K_{S_c}
\end{equation} 
with $\eta$ the totally anti-symmetric pseudo density. 
The volume is recovered as 
\begin{equation}
\label{eq_lim}
V_R=\lim \sum_C \sqrt{|q_c|}
\end{equation}
where the limit indicated is that of the subdivision into cubes getting finer
and finer. This regularization leads the way to quantization, as $q_C$ of
\eqref{eq_cube} can immediately be quantized by replacing the classical flux
quantities by their quantum counterparts. The limit in \eqref{eq_lim} is taken
in such a way that the vertices of a graph underlying a state to be acted upon 
end up in the bulk and not on the boundary of the cubes. It turns out that once
the subdivision is fine enough, the operator action on a given state stabilizes.
Thus the limit down to cubes of coordinate-volume\footnote{Here and in the
following, we denote with \emph{coordinate volume} the volume with respect to
some fiducial metric, for example the flat metric on $\R^3$ pulled back via a
coordinate chart} zero does not have to be carried out explicitly. In an
additional step, the resulting operator is averaged in a certain way, to rid it
of a remaining dependence on the subdivision and the positions of the surfaces
inside the cubes.  

In the vacuum representation, only cubes that contain a (three or higher valent)
vertex contribute in the sum over cubes in \eqref{eq_lim}. As we will show,
this is no longer true in the new representations, and we will have to show that
this does not lead to a blow-up of the sum in the limit of finer and finer
subdivision. Indeed considering a fixed cube $C$, we will have   
\begin{equation}
q_C[E]=\frac{1}{3!}\eta_{abc}\epsilon_{IJK}
(X^I_{S_a} +\bg{E}{}^I_{S_a}) 
(X^J_{S_b}+\bg{E}{}^J_{S_b})
(X^K_{S_c}+\bg{E}{}^K_{S_c}),
\end{equation} 
thus apart from the $X^3$ term that was present in the vacuum representation, we
also have terms $X^2\bg{E}$, $X(\bg{E})^2$, and $(\bg{E})^3$. To further analyze
the situation, we consider states based on a graph $\gamma$ and assume that the
subdivision into cubes is already fine enough such that each cube in the sum in
\eqref{eq_lim} is of one of the following types (see also figure
\ref{fi_volume}): 
\begin{enumerate}
\setcounter{enumi}{-1}
\item  A cube is of \emph{type 0} if it is not intersected by any edge of the
graph $\gamma$.   
\item A cube is of \emph{type 1} if it is intersected by one edge of the graph
$\gamma$, and this intersection is in such a way that precisely one of the 
surfaces $S_{a}$ internal to the cube is intersected transversally. 
\item A cube is of \emph{type 2} if it contains precisely one vertex, and such
that precisely two of the  surfaces $S_{a}$ internal to the cube are intersected
transversally by the edges emanating from the vertex. 
\item  A cube is of \emph{type 3} if it contains precisely one vertex, and such
that all three surfaces $S_{a}$ internal to the cube are intersected
transversally by the edges emanating from the vertex. 
\end{enumerate}
\begin{figure}
\centerline{\epsfig{file=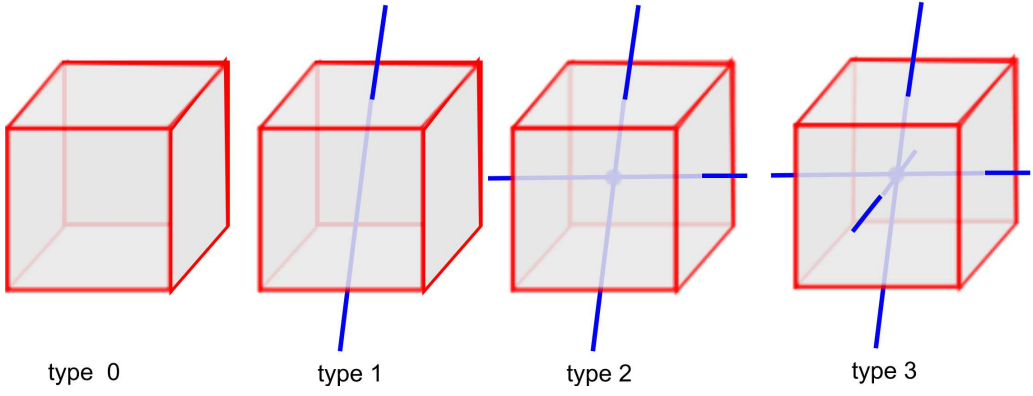, scale=0.7}}
\caption{\label{fi_volume} The different types of cubes considered in the text.}
\end{figure}
Thus we can write 
\begin{equation}
V_R=\lim \sum_{i=0}^3\sum_{C \text{ of type } i}  \sqrt{|q_c|}
\end{equation}
Let $L^3$ be the order of magnitude of the coordinate-volume of the region $R$,
and $\epsilon^3$ the typical coordinate-volume of a cube. When refining the
subdivision into cubes, the total number $N$ of cubes in the sum in
\eqref{eq_lim} thus goes like $L^3/\epsilon^3$. But the number of cubes of the
different types behaves very differently: The number of cubes of type 0 make up
the bulk and go as $L^3/\epsilon^3$. The number of cubes of type 1 goes as
$L/\epsilon$ since the edges are one-dimensional objects. Finally there are only
finitely many cubes of type 2 and 3, since there are only finitely many
vertices. 

Let us consider the action of $q_C$ for $C$ a  cube of type 2 or 3:
Schematically
\begin{equation}
q_C\Psi_\gamma=\sum \left(X^3+X^2\bg{E}
+X(\bg{E})^2+(\bg{E})^3\right)\Psi_\gamma
\end{equation}
But $\bg{E}$ goes as $\epsilon^2$, so we have 
\begin{equation}
\lim q_C\Psi_\gamma=\lim
\frac{1}{3!}\eta_{abc}\epsilon_{IJK}X^I_{S_a}X^J_{S_b}X^K_{S_c}
\end{equation}
and thus, since the sum is finite 
\begin{equation}
\lim \sum_{C \text{ of type 3,2}}\sqrt{|q_c|}\Psi_\gamma=
V'_R{}^\text{(old)}\Psi_\gamma 
\end{equation}
where $V'_R{}^{\vac}$ is the volume operator from the vacuum
representation, before the averaging to remove the remaining regularization
dependence. Next we consider a cube $C$ of type 0. We obviously have 
\begin{equation}
q_C\Psi_\gamma=\frac{1}{3!}\eta_{abc}\epsilon_{IJK}\bg{E}{}^I_{S_a}\bg{E}{}^J_{
S_b}\bg{E}{}^K_{S_c}
\end{equation} 
and hence, because of \eqref{eq_lim} 
\begin{equation}
\lim \sum_{C \text{ of type 0}}\sqrt{|q_c|}\Psi_\gamma= \bg{V}_R\Psi_\gamma 
\end{equation}
where $\bg{V}_R$ is the volume in the background geometry. Finally, we will show
that the sum over cubes of type 1 does not contribute in the limit. To that end,
we note that for $C$ of type 1, schematically 
\begin{equation}
 q_C\Psi_\gamma=\sum_{C \text{ of type 1}}
\left(X(\bg{E})^2+(\bg{E})^3\right)\Psi_\Gamma. 
\end{equation}
Since $\bg{E}$ is going like $\epsilon^2$,  we have that
$\epsilon^{-2}\sqrt{q_C}$ is converging to a finite operator. But then 
\begin{equation}
\sum_{C \text{ of type 1}}\sqrt{|q_c|}\Psi_\gamma= \epsilon^2 \sum_{C \text{ of
type 1}}
\frac{1}{\epsilon^2}\sqrt{|q_c|}\Psi_\gamma
\end{equation}
is a sum over $\approx L/\epsilon$ vectors, each with norm of order
$\epsilon^2$, and hence this sum converges to zero in the limit of $\epsilon$
going to zero. Thus we have proven that 
\begin{equation}
\lim  \sum_{C \text{ of type 3,2}}\sqrt{|q_c|}+\lim  \sum_{C \text{ of type
0}}\sqrt{|q_c|}= V'_R{}^{\vac}+\bg{V}_R.
\end{equation}
as desired. Finally, we have to carry out the averaging procedure, but this is
now trivial. The end result is a volume operator 
\begin{equation}
\lim  \sum_{C \text{ of type 3,2}}\sqrt{|q_c|}+\lim  \sum_{C \text{ of type
0}}\sqrt{|q_c|}= V_R^{\vac}+\bg{V}_R.
\end{equation}
with $V_R^{\vac}$ being the well known volume operator in the vacuum
representation. 
%--------------------------------------------------------------
\section{Unitary implementation of bundle automorphisms}
\label{se_uni}
%--------------------------------------------------------------
Now we consider the implementation of the automorphisms of the SU(2) principal
fiber bundle underlying the whole formalism. We will follow exactly the
treatment by Koslowski \cite{Koslowski:2007kh}. As is customarily done, we
discuss the automorphisms induced by diffeomorphisms and the gauge
transformations separately. Also in this section (and in all the following
as well), we will drop  all the $\pi$'s indicating the
representation. We hope that it is clear from context in which representation
we work. 

The first observation is that the unitary operators that implement
diffeomorphisms and gauge transformations in the vacuum representation are still
well defined and unitary in the new representations. The problem is that they do
not implement those transformations anymore. For a gauge transformation $g$ and
a diffeomorphism $\phi$ we have 
\begin{align}
U_\phi E_{S,f} U_\phi^{-1}
&=X_{\phi(S), \phi^{-1*}f}+\bg{E}_{S,f}
\neq \phi \pone E_{S,f}\\
U_g E_{S,f} U_g^{-1}
&=X_{S, \ad_g(f)}+\bg{E}_{S,f}
\neq g\gone E_{S,f}.
\end{align} 
Vice versa, it can be shown that if families of operators $\{U_g\}$ and
$\{U_\phi\}$ implement the gauge transformations and diffeomorphisms,
respectively, most of those operators can not be unitary. 

The problem is evidently that the background geometry present in the new
representations is immutable by operations on the Hilbert space and thus
prevents unitary implementations of diffeomorphisms and gauge transformations. 
This problem can be solved by enlarging the Hilbert space in such a way that
also the background geometries can be transformed, as was already observed in 
\cite{Koslowski:2007kh}. 

For a given background $\bg{E}$ we consider the Hilbert space
\begin{equation}
\label{eq_dirsum}
\mathcal{H}_{[\bg{E}]}:=\bigoplus_{\bg{\wt{E}}\in [\bg{E}]}
\mathcal{H}_{\bg{\wt{E}}}
\end{equation}
where $[\bg{E}]$ is the equivalence class of $\bg{E}$ with respect to
diffeomorphisms and gauge transformations. The Hilbert space is thus
effectively labeled by a spatial background metric modulo diffeomorphisms. 
We will write $\ket{T,\bg{\wt{E}}}$
for the cylindrical state $T$ in the (new) representation with background
geometry $\bg{\wt{E}}$ in the above Hilbert space. Thus we have 
\begin{equation}
\label{eq_scpr}
\scpr{T,\bg{E}}{T',\bg{E'}}=\scpr{T}{T'}\delta_{\bg{E},\bg{E'}}
\end{equation} 
where the first scalar product on the right hand side is just the one in the
vacuum representation. 
The Hilbert space $\mathcal{H}_{[\bg{E}]}$ carries a representation of the LQG
algebra, which is simply the direct sum of the representations on the
individual spaces. But now one can implement the diffeomorphisms and gauge
transformations unitarily, by setting \cite{Koslowski:2007kh} 
\begin{equation}
U_\phi\ket{T,\bg{E}}=\ket{\phi \pont T, \phi \pone \bg{E}}, \qquad 
U_g\ket{T,\bg{E}}=\ket{g\gont T, g \gone \bg{E}}. 
\end{equation}
We have to check that this definition gives a unitary representation, and
implements the transformations on operators correctly. That it is a
representation follows from the fact that the various actions $\triangleright$
involved are all representations. That it is unitary is immediately seen from
the fact that both, the vacuum scalar product, and the Kronecker delta used in
\eqref{eq_scpr} are invariant under gauge transformations and diffeomorphisms.

It is easy to show that with this definition, the transformations are also
correctly implemented on the operators. We give the example of the
diffeomorphisms. We calculate
\begin{align}
 U_\phi E_{S,f}\ket{T,\bg{E}}&=U_\phi
\left[\ket{X_{S,f}T,\bg{E}}+\bg{E}_{S,f}
\ket{T,\bg{E}}\right]\\
&=\ket{(\phi\pone X_{S,f}) (\phi \pont T),\phi\pone \bg{E}}
+\bg{E}_{S,f}\ket{\phi \pont T,\phi\pone \bg{E}}\\
&=\phi\pone E_{S,f} \ket{\phi \pont T,\phi\pone \bg{E}}
\end{align}
where in the last step we have used the fact that 
\begin{equation}
\phi\pone E_{S,f}\ket{\phi \pont
T,\phi\pone \bg{E}}=\ket{(\phi\pone X_{S,f}) (\phi \pont T),\phi\pone \bg{E}}
+\int_{\phi(S)}(\phi_{*}f)^I(\phi_{*}\bg{E})_I\,\, \ket{\phi \pont T,\phi\pone
\bg{E}}
\end{equation}
and that the integral simply evaluates to $\bg{E}_{S,f}$ since both, integrand
and integration region are pushed forward with $\phi$. 
From this we read off, that 
\begin{equation}
 U_\phi E_{S,f} U_\phi^{-1}=\phi\pone E_{S,f}.  
\end{equation}
For a cylindrical
function, the calculation is even simpler:
\begin{align}
 U_\phi T'\ket{T,\bg{E}}&=  U_\phi
\ket{T'T,\bg{E}}=\ket{\phi\pont(T'T),\phi\pone\bg{E}}\\
&=\ket{(\phi\pont T')(\phi\pont T)),\phi\pone\bg{E}}\\
&=(\phi\pont T')\ket{\phi\pont T,\phi\pone\bg{E}}
\end{align}
from which we read off
\begin{equation}
 U_\phi T'U_\phi^{-1}=\phi\pont T'.
\end{equation}
A similar calculation can be carried out for the gauge transformations,
showing that they, too, are implemented on operators correctly.

Finally, a straightforward calculation shows that not only the groups of gauge
transformations and spatial diffeomorphisms are correctly implemented, but also
their direct product \eqref{eq_dirpr}. If one sets $U_{(g,\phi)}:=U_g U_\phi$
one finds
\begin{equation}
 U_{(g,\phi)}U_{(g',\phi')}= U_{(g,\phi)\cdot(g',\phi')}. 
\end{equation}
We have thus shown that the action of the bundle automorphisms on the basic
variables of the theory are represented in an anomaly-free way. This in turn 
means that the hypersurface-algebra corresponding to the bundle automorphisms is
represented in an anomaly-free way.\footnote{One should not take this statement
too literally though, since -- as in the vacuum representation -- the operators
generating the unitary transformations are not well defined.} The situation
with respect to the constraint algebra is thus precisely the same as in the
vacuum representation, and we are set to tackle the implementations of the
constraints with the same methods as in the vacuum case. 

In \eqref{eq_dirsum}, the subspaces corresponding to different backgrounds are
specified to be orthogonal to each other. The reader may wonder whether a
different choice would be permissible, such that the inner product between
states with the same background is unchanged, but states with different
backgrounds are not necessarily orthogonal to each other. It is however easy to
see that the above choice is the only one that renders $E_{S,f}$ self-adjoint. 

We finally remark that the representation of the kinematical algebra of loop
quantum gravity on the Hilbert space $\mathcal{H}_{[\bg{E}]}$ is not a counter
example to the uniqueness results
\cite{Fleischhack:2004jc,Lewandowski:2005jk}, despite the fact that it carries
a unitary implementation of the bundle automorphisms. The theorems require a
cyclic representation, but the representation on $\mathcal{H}_{[\bg{E}]}$ is
highly reducible, the direct sum decomposition in \eqref{eq_dirsum} giving a
decomposition into invariant subspaces.
%--------------------------------------------------------------
\section{Implementation of the constraints}
\label{se_imp}
%--------------------------------------------------------------
Now we are looking at the implementation of the constraints. The natural
expectation would be that implementing the constraints using the new
representations is not possible, since they contain background geometry. But
this is not the case, as was realized in \cite{Koslowski:2007kh}. In fact, the
diffeomorphism
constraint can be implemented
exactly as in the vacuum-representation. The situation is slightly more
complicated for the Gau{\ss} constraint, as we will explain below. With
hindsight, it is not too surprising that
the implementation of the constraints succeeds. Implementing them is not about
ridding the states of background \emph{per se}, but ridding them of coordinate
and gauge dependence. 

We will first consider the implementation of the diffeomorphism constraint. Then
we will turn to the Gau{\ss} constraint. Finally we will consider states that
satisfy both classes of constraints. 
\subsection{Diffeomorphism constraint}
\label{se_diff}
The diffeomorphism constraint can be implemented on vectors in
$\mathcal{H}_{[\bg{E}]}$ with \emph{exactly} the same strategy as in the vacuum
representation. As far as we understand, this is also the strategy used in
\cite{Koslowski:2007kh}, but we will spell out more details. We will borrow
notation from \cite{Ashtekar:2004eh} and urge the reader to consult the
reference for additional information about the standard treatment. 

To start with, we make some definitions. We will call 
a diffeomorphism $\phi$ a symmetry of a triad $\bg{E}$, if $\phi$
leaves $E$ invariant, $\phi \pone \bg{E}=\bg{E}$.\footnote{This is closely
related to the concept of isometry: An isometry of the spatial geometry would be
a diffeomorphism that leaves $\bg{E}$ invariant up to a gauge transformation.
These will be  relevant in section \ref{se_imp_aut}}. Then
\begin{enumerate}
\item Let $\diff$ be the group of all diffeomorphisms. 
\item Let $\diff_{(\alpha,\bg{E})}$ be the group of diffeomorphisms that are
symmetries of $\bg{E}$ and map the graph $\alpha$ onto itself.
Note: If $\bg{E}=0$ these are just the diffeomorphisms mapping $\alpha$ onto
itself (denoted $\diff_\alpha)$ in \cite{Ashtekar:1996eg}.
\item Let $\tdiff_{(\alpha,\bg{E})}$ be the group of diffeomorphisms that are
symmetries of $\bg{E}$ and map each edge  of the graph $\alpha$ onto itself.
Note: Again, if $\bg{E}=0$ these are just the diffeomorphisms mapping each edge
of $\alpha$ onto itself, denoted as $\tdiff_\alpha$ in \cite{Ashtekar:1996eg}. 
\item Let $\symm_{(\alpha,\bg{E})}$ be the quotient
$\diff_{(\alpha,\bg{E})}/\tdiff_{(\alpha,\bg{E})}$. 
\end{enumerate}
Let $T_\alpha$ be a cylindrical function that depends on the holonomies of all
the edges of $\alpha$ in a non-trivial way. We can then define the linear form
$(T_\alpha,\bg{E}|$ on $\mathcal{H}_{[\bg{E}]}$ as follows:
\begin{equation}
\label{eq_diff}
(T_\alpha,\bg{E}|T_\beta, \bg{E'}\rangle
:=\sum_{[\phi]\in\diff/\diff_{(\alpha,\bg{E})}} 
\sscpr{T_\alpha,\bg{E}}
{P_{(\alpha,\bg{E})}^\dagger U^\dagger_\phi}
{T_\beta, \bg{E'}}
\end{equation}
where the projection $P_{(\alpha,\bg{E})}$ is defined as
\begin{equation}
\label{eq_pro}
P_{(\alpha,\bg{E})}
\ket{T_\alpha,\bg{E}}:=\frac{1}{\betr{\symm_{(\alpha,\bg{E})}}} \sum_{[\phi]\in
\symm_{(\alpha,\bg{E})}} U_\phi\ket{T_\alpha,\bg{E}}.
\end{equation}
Now we have to demonstrate that these definitions makes sense, i.e.\ that they
are well defined and give finite results, and that $(T_\alpha,\bg{E}|$ is indeed
a diffeomorphism invariant state. 

We will first address the issue of the the notions being well defined: If we 
chose, instead of $\phi$ a different member $\phi'$ of the equivalence class
$[\phi]$ on the right hand side of \eqref{eq_pro}, we will have
$\phi'=\phi\omega$, where $\omega$ is in  $\tdiff_{(\alpha,\bg{E})}$. But then 
$U_{\phi'} \ket{T_\alpha,\bg{E}}=U_\phi\ket{T_\alpha,\bg{E}}$ because $\omega$
leaves the state invariant by definition. When we chose a different
representative $\phi'$ on the right hand side of \eqref{eq_diff}, this can be
equivalently written as using the original $\phi$, but then $\theta\phi$ in
place of $\phi$ in \eqref{eq_pro}, where $\theta$ is in 
$\tdiff_{(\alpha,\bg{E})}$. But it is immediately shown, that this does not
change the result of $P_{(\alpha,\bg{E})}$. Finally, it is easy to show that,
due
to the division by the size of $\symm_{(\alpha,\bg{E})}$, the above is invariant
under subdivision of edges of $\alpha$. 

The next comment is on finiteness. We first show:
\begin{lemm}
$\symm_{(\alpha,\bg{E})}$ is finite. 
\end{lemm}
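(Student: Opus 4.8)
The plan is to show that $\symm_{(\alpha,\bg{E})}$ is finite by recognizing it as a subgroup of a group that is already known to be finite in the vacuum theory. Recall that $\symm_{(\alpha,\bg{E})}=\diff_{(\alpha,\bg{E})}/\tdiff_{(\alpha,\bg{E})}$, where $\diff_{(\alpha,\bg{E})}$ consists of diffeomorphisms that are both symmetries of $\bg{E}$ \emph{and} map the graph $\alpha$ onto itself, while $\tdiff_{(\alpha,\bg{E})}$ consists of those that additionally fix each edge individually. The key observation is that the condition of being a symmetry of $\bg{E}$ is an \emph{extra} constraint: every element of $\diff_{(\alpha,\bg{E})}$ already maps $\alpha$ onto itself, and likewise every element of $\tdiff_{(\alpha,\bg{E})}$ already fixes each edge of $\alpha$. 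Hence there are natural inclusions $\diff_{(\alpha,\bg{E})}\subseteq \diff_\alpha$ and $\tdiff_{(\alpha,\bg{E})}\subseteq \tdiff_\alpha$, where $\diff_\alpha$ and $\tdiff_\alpha$ are the purely combinatorial groups from the vacuum treatment in \cite{Ashtekar:1996eg}.

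First I would make precise the comparison with the vacuum symmetry group. In the standard representation, the analogous quotient $\symm_\alpha:=\diff_\alpha/\tdiff_\alpha$ is known to be finite: it acts faithfully by permuting the finitely many edges of the graph $\alpha$ (preserving the incidence relations at vertices), so it embeds into the symmetric group on the edge set of $\alpha$, which is finite. This is a standard fact from \cite{Ashtekar:1996eg} that I would simply invoke. The content of the lemma is then that adding the requirement "$\phi$ is a symmetry of $\bg{E}$" can only make the group smaller.

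The core of the argument is to exhibit an injective group homomorphism $\symm_{(\alpha,\bg{E})}\hookrightarrow\symm_\alpha$. The map is the obvious one induced by the inclusions: a class $[\phi]\in\symm_{(\alpha,\bg{E})}$ is sent to the class of $\phi$ in $\symm_\alpha$. I would check that this is well defined, which amounts to verifying $\tdiff_{(\alpha,\bg{E})}\subseteq\tdiff_\alpha$ (so that representatives differing by an element of $\tdiff_{(\alpha,\bg{E})}$ land in the same $\tdiff_\alpha$-coset), and that it is a homomorphism, which is immediate since it is induced by inclusion of groups. For injectivity I would check that if $\phi\in\diff_{(\alpha,\bg{E})}$ lies in $\tdiff_\alpha$ (i.e.\ fixes each edge of $\alpha$), then it already lies in $\tdiff_{(\alpha,\bg{E})}$; this is clear because $\tdiff_{(\alpha,\bg{E})}$ is defined as exactly those symmetries of $\bg{E}$ that fix each edge, and $\phi$ is a symmetry of $\bg{E}$ by assumption. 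Since $\symm_\alpha$ is finite, any subgroup of it — in particular the image of this injection — is finite, and therefore $\symm_{(\alpha,\bg{E})}$ is finite.

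I expect the argument to be essentially bookkeeping, with no serious obstacle, provided the finiteness of the vacuum group $\symm_\alpha$ is taken as known from \cite{Ashtekar:1996eg}. The only point requiring a little care is that the well-definedness and injectivity of the homomorphism rest on comparing two \emph{differently defined} normal subgroups, so I would state cleanly that the defining condition of the "$\bg{E}$-decorated" groups is a conjunction of the vacuum condition with the symmetry-of-$\bg{E}$ condition, which makes all four inclusions ($\diff_{(\alpha,\bg{E})}\subseteq\diff_\alpha$, $\tdiff_{(\alpha,\bg{E})}\subseteq\tdiff_\alpha$, and the compatibility $\diff_{(\alpha,\bg{E})}\cap\tdiff_\alpha=\tdiff_{(\alpha,\bg{E})}$) transparent. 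With these in hand the injection is forced and the result follows. Should one prefer not to appeal to \cite{Ashtekar:1996eg}, an alternative and equally short route is to argue directly that $\symm_{(\alpha,\bg{E})}$ acts faithfully as permutations of the edge set of $\alpha$, hence injects into a finite symmetric group; I regard this as the more self-contained variant but the subgroup argument above is the cleaner plan.
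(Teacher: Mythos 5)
Your proposal is correct and follows essentially the same route as the paper: both arguments rest on the inclusion $\diff_{(\alpha,\bg{E})}\subseteq\diff_\alpha$, the known finiteness of $\diff_\alpha/\tdiff_\alpha$, and the observation that $\diff_{(\alpha,\bg{E})}\cap\tdiff_\alpha=\tdiff_{(\alpha,\bg{E})}$, which the paper phrases as ``classes in $\diff_\alpha/\tdiff_\alpha$ are not split over classes in $\symm_{(\alpha,\bg{E})}$'' and you phrase as injectivity of the induced map $\symm_{(\alpha,\bg{E})}\hookrightarrow\symm_\alpha$. Your version is slightly more explicit about the well-definedness bookkeeping, but the content is identical.
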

\begin{proof}
$\diff_{(\alpha,\bg{E})}$ is a subset of $\diff_\alpha$. Therefore elements of 
$\symm_{(\alpha,\bg{E})}$ will consist of elements of $\diff_\alpha$. We know
that $\diff_\alpha/\tdiff_\alpha$ is finite. We ask: Can two elements
$\phi,\phi'$ of  $\diff_{(\alpha,\bg{E})}$ that are in the same class in 
$\diff_\alpha/\tdiff_\alpha$ be in different classes in
$\symm_{(\alpha,\bg{E})}$? If the answer is `no', then $\symm_{(\alpha,\bg{E})}$
can have at most as many classes as $\diff_\alpha/\tdiff_\alpha$, and the lemma
is proven. To show that this is indeed the case, take $\phi,\phi'$ in 
$\diff_{(\alpha,\bg{E})}$ and in the same class in 
$\diff_\alpha/\tdiff_\alpha$, i.e.\
\begin{equation}
\phi=\phi'\wt{\phi}, \qquad \wt{\phi}\in \tdiff_\alpha. 
\end{equation} 
but then $\wt{\phi}=(\phi')^{-1}\phi$, and  $\diff_{(\alpha,\bg{E})}$ is a
group, so $\wt{\phi}$ is also in $\diff_{(\alpha,\bg{E})}$. Thus classes in  
$\diff_\alpha/\tdiff_\alpha$ are not split over classes in 
$\symm_{(\alpha,\bg{E})}$, and thus the lemma is proven.  
\end{proof}
Thus  $P_{(\alpha,\bg{E})}$ is a finite projection. Finally we remark that while
the sum in \eqref{eq_diff} is infinite, only finitely many terms are non-zero,
due to the fact that the sum is effectively only over diffeomorphisms that act
no-trivial on the state $\ket{T_\alpha, \bg{E}}$, and the scalar product is only
nonzero, if the backgrounds and graphs in the scalar product coincide
\emph{exactly}. 

Finally, let us demonstrate that the functional $(T_\alpha,\bg{E}|$ is indeed
diffeomorphism invariant. To this end, consider 
\begin{equation}
(T_\alpha,\bg{E}|U_{\omega}| T_\beta, \bg{E'}\rangle
\end{equation}
where $\omega$ is an arbitrary diffeomorphism. If we write out the sums, we get
an expression of the form 
\begin{equation}
\sum_{[\phi]\in\diff/\diff_{(\alpha,\bg{E})}} 
\sum_{[\phi']\in \symm_{(\alpha,\bg{E})}} 
\sscpr{\ldots}{U^\dagger_{\phi'} U^\dagger_{\phi}U_\omega}{\ldots}.
\end{equation}
But since $\diff$ is a group and $\diff_{(\alpha,\bg{E})}$ acts from the right,
we have that 
\begin{equation}
\sum_{[\phi]\in\diff/\diff_{(\alpha,\bg{E})}} U^\dagger_{\omega^{-1}\circ\phi}=
\sum_{[\phi]\in\diff/\diff_{(\alpha,\bg{E})}} U^\dagger_{\phi}
\end{equation}
and hence 
\begin{equation}
(T_\alpha,\bg{E}|U_{\omega}| T_\beta, \bg{E'}\rangle
=(T_\alpha,\bg{E}| T_\beta, \bg{E'}\rangle
\end{equation}
as it should be. 

The definition of the inner product on the diffeomorphism invariant states now
proceeds exactly as in the vacuum case, and leads to a diffeomorphism invariant
Hilbert space $\mathcal{H}_{[\bg E]}^{\diff}$.  We emphasize that the
construction is such that for the fully degenerate background geometry
$\bg{E}=0$, we have 
$\mathcal{H}_{[0]}^{\diff}=\mathcal{H}^{\diff}$, where the latter is the
(standard) diffeomorphism-invariant Hilbert space of the vacuum representation.
\subsection{Gau{\ss} constraint}
Now we come to the implementation of the Gau{\ss} constraint. Here our treatment
will differ somewhat from \cite{Koslowski:2007kh}. We will comment on the
differences below. 

Also, we find that we cannot blindly follow the strategy used in the vacuum
representation: There, implementation of the constraint can be done by group
averaging involving an \emph{integral} over the group of gauge transformations. 
While this group is huge, since there are only finitely many vertices involved
in any given graph, the integral boils down to finitely many integrals over
SU(2) and is thus well defined. In our context, since gauge transformation
may act non-trivially on all of $\bg{E}$, it is unclear how to extend this
treatment
to the new representations. Instead one might be tempted to follow the strategy
that was used in the case of the diffeomorphism constraint. There, a \emph{sum}
over the group elements was performed. Now, since a gauge transform of a state
may not be orthogonal to the state itself in the vacuum representation, one
might suspect that this strategy runs into problems with convergence. Indeed,
this is the case. To see this, we try to define
\begin{equation}
\label{eq_large}
 \{T, \bg{E}| =\sum_{g\in \gauge/\gauge_{T,\bg{E}}}\bra{T,\bg{E}}U^\dagger_g  
\end{equation}
where $\gauge_{T,\bg{E}}$ are the gauge transformation that are a
symmetry of $\bg{E}$ (i.e. leave $\bg{E}$ invariant) and leave $T$ invariant.
The problem with this is that the
sum may contain infinitely many $g$ which change $T$ but leave $\bg{E}$
invariant. Think for example of gauge transformations that are rotations with
axis $\bg{E}$ at each point. Such transformations will still generically change
a given functional $T$. When evaluating $\{T, \bg{E}|$ on a state 
$\ket{T',\bg{E}}$ with $T'$ gauge invariant and $\scpr{T}{T'}\neq 0$, each
transformation that fixes $\bg{E}$ but changes $T$ will produce the same
non-vanishing number, leading to a divergence.    

The problem described above can be avoided if one starts with gauge invariant
functionals $T$. Then obviously there are no gauge transformations that change
$T$ and leave $\bg{E}$ invariant. Let $\gauge_\bg{E}$ be the group of gauge
transformations that are symmetries of $\bg{E}$. For a gauge invariant $T$ we
set
\begin{equation}
\label{eq_small}
\{T, \bg{E}| :=\sum_{g\in \gauge/\gauge_{\bg{E}}}\bra{T,\bg{E}}U^\dagger_g.
\end{equation}
This is obviously gauge invariant, and evaluated on a state there can be at most
one non-zero term:
\begin{equation}
\label{eq_gscpr}
\{T, \bg{E}|T',\bg{E'}\rangle = \scpr{T}{T'}\begin{cases} 1 &\text{ if }
\bg{E},\bg{E'} \text{ are gauge-related}\\
0& \text{ otherwise}
\end{cases}
\end{equation}
We emphasize that the
construction is such that for the fully degenerate background geometry
$\bg{E}=0$, we have 
$\mathcal{H}_{[0]}^{\gauge}=\mathcal{H}^{\gauge}$, where the latter is the
(standard) gauge-invariant Hilbert space of the vacuum representation.

We also note that (in view of \eqref{eq_gscpr}), an orthonormal  basis of  
$\mathcal{H}_{[\bg{E}]}^{\gauge}$ can be given as follows: First, since $|T, \bg{E}\}$ depends on $\bg{E}$ only through the gauge equivalence class $[\bg{E}]$, we might as well denote it by $|T, [\bg{E}]\}$.
Then a  basis is given by  $\{\,|T, [\bg{E}]\}\,\}$, where $T$ runs through all spin nets, and $[\bg{E}]$ through all gauge equivalence classes.

We now come to the difference with \cite{Koslowski:2007kh}. Koslowski
puts forward the idea, that one should also consider
states in which the connection is coupled to the background, as these can be
gauge invariant, too. Take for example the case of a loop
$\alpha$ with endpoint $p$, and consider the functional 
\begin{equation}
 T[A,\bg{E}] =\tr\left(h_\alpha[A]\bg{E}(p)\right).  
\end{equation}
This is invariant, provided both $A$ and $\bg{E}$  transform under gauge
transformations. But we do not see how this can be achieved in the present
formalism: Nothing prevents us from defining the state $\ket{T[\,\cdot\, ,
\bg{E}],\bg{E}}$, where $T$ is the functional defined above, but it is not gauge
invariant. Rather 
\begin{equation}
 U_g\ket{T[\,\cdot\, , \bg{E}],\bg{E}}
=\ket{T[g^{-1} \,\cdot\, g,\bg{E}],g\gone \bg{E}}
=\ket{T[\, \cdot\, , g \gone \bg{E}],g \gone \bg{E}}. 
\end{equation}
Instead, one might consider it more promising to start with the
functional (a treatment along these lines was probably implied  
in \cite{Koslowski:2007kh} )
\begin{equation}
\label{eq_ginv}
 \{\psi|:=\sum_{g\in\mathcal{G}/\mathcal{G}_{0,\bg{E}}}\bra{T[\,\cdot\, , \bg{E}],g \gone \bg{E}}. 
\end{equation}
Here, $0$ denotes the zero spin network. One can envision this functional as
being obtained from the action of a hypothetical operator $\tr(h_\alpha
\bg{\widehat{E}}(p))$ on the state 
\begin{equation}
 \{0,[\bg{E}]|  \equiv\sum_{g\in\mathcal{G}/\mathcal{G}_{\bg{E}}}\bra{0,g
\gone \bg{E}}.
\end{equation}
Indeed, operators of this type have
been considered in the early days of loop quantum gravity. One can check that 
\eqref{eq_ginv} defines functional on states, 
\begin{equation}
 \label{eq_newfun}
 \{\psi\ket{T,\bg{E'}}
=\begin{cases}
 \scpr{T[\,\cdot\, , \bg{E'}]}{T} & \text{ if } \bg{E'}\in [E]\\
0 & \text{ otherwise} 
 \end{cases}.
\end{equation}
A straightforward calculation shows that this functional is also gauge
invariant. It is, however, not an element of the diffeomorphism invariant Hilbert space. This can be seen very easily as follows: While $\{\psi|$ evaluates to something non-zero only on states $\ket{T,\bg{E'}}$ where $T$ is not gauge invariant, all functionals in $\mathcal{H}_{[\bg{E}]}^{\gauge}$ are non-zero only on states $\ket{T,\bg{E'}}$ where $T$ \emph{is} gauge invariant. 

Thus, to give functionals such as \eqref{eq_newfun} a home, one would need a scalar product on them. This may be possible, for example by modifying the procedure suggested in \eqref{eq_large} in such a way that the sum is over a smaller set. This however risks to yield a non-gauge-invariant functional.  
To summarize, it may well possible -- and this possibility is very exciting -- to also introduce states that couple to the background, but this would necessitate a departure from the framework considered here (and as far as we understand, also from the one in \cite{Koslowski:2007kh}).
\subsection{States invariant under bundle automorphisms}
\label{se_imp_aut}
Up to now we have considered diffeomorphisms and gauge transformations
separately. Now we want to show how one can obtain states that are invariant
under their semidirect product, the bundle automorphisms. The idea is to first
average over gauge transformations, then over diffeomorphisms. One has to be
careful, however because there may be diffeomorphisms that are symmetries up to
a gauge transformation. Applying these to a gauge invariant functional and
summing over them leads to an over-counting and, possibly, to a divergence.    

Thus the averaging for the diffeomorphism invariant states has to be modified
slightly. We denote by $\giff_{(\alpha,\bg{E})}$ the group of diffeomorphisms
that are
symmetries of $\bg{E}$ up to a gauge transformation, and map the graph $\alpha$
onto itself.
Let $\tgiff_{(\alpha,\bg{E})}$ be the group of diffeomorphisms that are
symmetries of $\bg{E}$ up to gauge transformations and map each edge  of the
graph $\alpha$ onto itself.
Finally  Let $\gymm_{(\alpha,\bg{E})}$ be the quotient of the above. Then, for a
state $\ket{T, \bg{E}}$ with $T$ gauge invariant, the group averaging over
diffeomorphisms and gauge transformations is 
\begin{equation}
(T_\alpha,\bg{E}|T_\beta, \bg{E'}\rangle
:=\sum_{[\phi]\in\diff/\giff_{(\alpha,\bg{E})}} 
\{T_\alpha,\bg{E}
|P_{(\alpha,\bg{E})}^\dagger U^\dagger_\phi|
T_\beta, \bg{E'}\rangle
\end{equation}
where
\begin{equation}
P_{(\alpha,\bg{E}}
\ket{T_\alpha,\bg{E})}:=\frac{1}{\betr{\gymm_{(\alpha,\bg{E})}}} \sum_{[\phi]\in
\gymm_{(\alpha,\bg{E})}} U_\phi\ket{T_\alpha,\bg{E}}.
\end{equation}
It is easy to show that this is well defined, with the same reasoning as given
in section \ref{se_diff}. The proof that $\gymm$ is finite goes through exactly
as before. It is also easy to see that evaluation is always finite. 
Finally it is obvious that the resulting functional is invariant under both,
diffeomorphisms and gauge transformations.   
%--------------------------------------------------------------
\section{Outlook}
\label{se_con}
%--------------------------------------------------------------
In the present article, we have demonstrated how the new representations form a
family with the (standard) vacuum representation being one member. This begs
the question as to the significance and use of these representations.

As we have
already stated before, we do not think that the new representations are as
fundamental as the vacuum representation, because they contain so much
classical background structure. Still, it may be interesting to see in what
sense they can be used to solve the Hamiltonian constraint. On the one hand,
one could take over Thiemann's construction literally, and see whether the
constraint is well defined, and how solutions would look. On the other hand
the new representations allow for new quantization strategies, because the
volume operator can now have non-vanishing kernel. Likewise, we have
preliminary indications that one may be able to define an operator
corresponding to the integral of curvature over a surface. 

All of this would also be interesting for applying the new representation in
situations, in which one would like to treat most of the gravitational
excitations as background, and only study in detail the excitations of geometry
over this kind of geometric condensate. 

Finally, the new representations could be a spring board to the construction of
yet other representations which are more semiclassical.

On a more technical level, it may be interesting to study different types of
backgrounds. When we were discussing the backgrounds $\bg{E}$ the reader may
have had a smooth, non-degenerate geometry in mind. But more general choices are
also possible, as long as fluxes, areas and volumes are still well defined.
Possibilities include for example fields that are only non-vanishing on
submanifolds.   

We hope to come back to some of the questions raised above, at another time. 

%--------------------------------------------------------------
\section*{Acknowledgements}
I thank T.\ Koslowski for interesting and helpful comments on the subject and on a draft of this work. The anonymous referees also provided interesting comments and suggestions that helped improve the manuscript. This work has been partially supported by the Spanish MICINN
project No. FIS2008-06078-C03-03. I also gratefully acknowledge a travel grant from the European Science Foundation research network \emph{Quantum Geometry and Quantum Gravity}.  

%------------------------------------------------------------

\end{document}